\def \@true {TT}
\def \@false {FL}
\def \@setflag #1=#2{\edef #1{#2}}%
\newcommand{\category}[3]{%
  \if \@firstcategory
    \par\smallskip\noindent\textsc{1998 ACM Subject Classification. }%
    \@setflag \@firstcategory = \@false
  \else
    \unskip ; %
  \fi
  \@ifnextchar [{\@category{#1}{#2}{#3}}{\@category{#1}{#2}{#3}[]}}
\def \@category #1#2#3[#4]{%
  {\let \and = \relax
   #1 #2}}
\newif\ifshort
\newcommand{\CC}[1]{\ensuremath{\text{\textsc{#1}}}}
\newcommand{\F}[1]{\mathbf{F}_{\!#1}}
\newcommand{\FGH}[1]{\mathscr{F}_{\!#1}} %
\newcommand{\elem}{\CC{Elementary}}
\newcommand{\Tow}{\CC{Tower}}
\newcommand{\ack}{\CC{Ack}}
\newcommand{\hack}{\CC{HAck}}
\DeclarePairedDelimiter{\tup}{\langle}{\rangle}
\newcommand{\eqdef}{\eqby{def}}
\newcommand{\eqby}[1]{\stackrel{\text{{\tiny{#1}}}}{=}}
\newcommand{\equivdef}[1]{\stackrel{\text{{\tiny{def}}}}{\Leftrightarrow}}
\newcommand{\Conf}{\mathit{Conf}}
\newcommand{\lex}{\leq_{\mathrm{lex}}}
\newcommand{\lem}{\leq_{\mathrm{mset}}}
\def\keywords{\smallskip\noindent\textsc{Keywords. }}
\newtheorem{theorem}{Theorem}[section]
\newtheorem{lemma}[theorem]{Lemma}
\newtheorem{proposition}[theorem]{Proposition}
\newtheorem{corollary}[theorem]{Corollary}
\newtheorem{fact}[theorem]{Fact}
\theoremstyle{definition}
\newtheorem{definition}[theorem]{Definition}
\theoremstyle{remark}
\newtheorem{claim}{Claim}[theorem]
\newtheorem{remark}[theorem]{Remark}
\tikzstyle{intermediate}=[draw=gray!90,very thick,fill=gray!20,circle]
\tikzstyle{state}=[intermediate,minimum width=.65cm,inner sep=1pt]
\tikzstyle{rstate}=[state,rectangle,rounded corners=8pt,minimum
\tikzstyle{every node}=[font=\small]
\tikzstyle{accepting}=[accepting by double]
\tikzstyle{initial}=[initial by arrow,initial text=]
\providecommand{\urlstyle}[1]{}
\providecommand{\doi}[1]{\href{http://dx.doi.org/#1}{\nolinkurl{doi:#1}}}
\renewcommand{\cite}{\citep}
\begin{document}
\providecommand{\subfigureautorefname}{Fig.$\!$}
\def\sectionautorefname{Section}
\def\subsectionautorefname{Section}
\def\subsubsectionautorefname{\S\ignorespaces}
\def\subfigureautorefname{Figure}
\def\chapterautorefname{Chapter}
\title{Complexity Bounds \mbox{for Ordinal-Based Termination}}
\thanks{Invited talk at the 8th International Workshop
  on Reachability Problems (RP~2014, 22--24 September 2014, Oxford).  Work
  funded in part by the ANR grant 11-BS02-001-01 \textsc{ReacHard}.}
\author[S.~Schmitz]{Sylvain Schmitz}
\address{ENS Cachan \& INRIA, France}
\email{schmitz@lsv.ens-cachan.fr}
\begin{abstract}
`What more than its truth do we know if we have a proof of a theorem
in a given formal system?'  We examine Kreisel's question in the
particular context of program termination proofs, with an eye to
deriving complexity bounds on program running times.

Our main tool for this are \emph{length function theorems}, which
provide complexity bounds on the use of well quasi orders.  We
illustrate how to prove such theorems in the simple yet until now
untreated case of ordinals.  We show how to apply this new theorem to
derive complexity bounds on programs when they are proven to terminate
thanks to a ranking function into some ordinal.

\category{F.2.0}{Analysis of Algorithms and Problem
  Complexity}{General} \category{F.3.1}{Logics and Meanings of
  Programs}{Specifying and Verifying and Reasoning about Programs}

\keywords Fast-growing complexity, length function theorem,
Ramsey-based termination, ranking function, well quasi order
\end{abstract}
\maketitle
\section{Introduction}
\label{sec-intro}
Whenever we prove the termination of a program, we might also expect
to gain some information on its complexity.  The jump from termination
to complexity analysis is however often involved.  %
The question has already been studied for many
termination techniques, e.g.\ termination
orderings~\citep{hofbauer92,weiermann94,weiermann95,buchholz95,lepper01},
polynomial interpretations~\citep{bonfante01},
dependency pairs~\citep{hirokawa08}, size-change
abstractions~\citep{benamram14,daviaud14}, abstract
interpretation~\citep{gulwani09}, or ranking functions~\citep{alias10}
to cite a few.

The purpose of this paper is to present the complexity bounds one can
similarly derive from termination proofs relying on \emph{well quasi
orders} (wqo).  There are already some accessible introductions to the
subject~\citep{aawqo,concur/SchmitzS13}, with applications to
algorithms for so-called `well-structured systems.'  Our emphasis here
is however on the particular case of
\emph{well orders}, i.e.\ of ranking functions into ordinal numbers.
Although this is arguably the oldest and best-understood termination
proof technique, which can be tracked back for instance to works by
\citet{turing49} or \citet{floyd67}, deriving complexity bounds for
well orders has only been considered in restricted cases in the wqo
literature~\citep{abriola12}.  As we shall see, by revisiting ideas
by \citeauthor*{buchholz94}~[\citenum{cichon92,buchholz94}] and the
framework of~\citep{schmitz11}, the case of well orders turns out to
be fairly simple, and provides an introduction to the definitions and
techniques employed for more complex wqos.

\subsubsection*{Contents\nopunct.}
After setting the stage in \autoref{sec-ordterm} by recalling the
definitions of well quasi orders, ranking functions, and order types,
we work out the details of the proof of a \emph{length function
theorem} for ordinals below $\varepsilon_0$ in \autoref{sec-upb}.
Such combinatorial statements provide bounds on the length of
so-called \emph{bad sequences} of elements taken from a wqo---i.e.\ of
descending sequences in the case of a well-order---, and thus on the
running time of programs proved to terminate using the same wqos.

More precisely, we first recall in \autoref{sec-nwqo} the main notions
employed in the proofs of such theorems in~\citep{schmitz11,aawqo},
and apply them to the ordinal case in \autoref{sec-lft}.  This yields
a new length function theorem, this time for ordinals
(\autoref{th-lft}).  As far as we know, this is an original
contribution, which relies on ideas developed by Cicho\'n and others
in the 1990's~\citep{cichon92,buchholz94} on the use of ordinal norms
for substructural hierarchies (recalled in \autoref{sec-subrec}).
Unlike the length function theorems for other wqos found in the
literature~\citep{mcaloon,clote,weiermann94,cichon98,figueira11,schmitz11,aawqo,abriola12}, \autoref{th-lft}
does not just provide an upper bound on the maximal length of bad
sequences, but offers instead an \emph{exact} explicit formulation for
such lengths using Cicho\'n's hierarchy of functions.

Those bounds are often more precise than actually needed, and we show
in \autoref{sec-fgcc} how to classify them into
suitable \emph{fast-growing} complexity classes~\citep{schmitz13}.  We
also zoom in on the bounds for lexicographic ranking functions
in \autoref{sec-ramsey}, and relate them to the bounds obtained
in~\citep{figueira11} for the Ramsey-based termination technique
of \citet{podelski04}.

\section{Well Quasi Orders and Termination}
\label{sec-ordterm}
In terms of operational semantics, a termination proof establishes
that the relation between successive program configurations is well
founded.  Rather than proving well foundedness from first principles,
it is much easier to rely on existing well founded relations, whether
we are attempting to prove termination with pen and paper or using an
automatic tool.  Well quasi orders and well orders are in this regard
very well studied and well behaved classes of well founded relations.

\subsection{Well Quasi Orders}

A \emph{quasi order} (qo) $\tup{A,{\leq}}$ consists of a support set
$A$ along with a transitive reflexive relation ${\leq}\subseteq
A\times A$.  We call a finite or infinite sequence $x_0, x_1,
x_2,\dots$ over $A$ \emph{good} if there exist two indices $i<j$ such
that $x_i\leq x_j$, and \emph{bad} otherwise.
\begin{definition}\label{def-wqo}%
  A \emph{well quasi order} (wqo) is a qo $\tup{A,{\leq}}$ such that
  any infinite sequence $x_0, x_1, x_2,\dots$ of elements over $A$
  is good.  Equivalently, any bad sequence over $A$ is finite.
\end{definition}
There are many equivalent definitions for wqos~\citep[see
e.g.][\chapterautorefname~1]{aawqo}.  Notably, $\tup{A,\leq}$ is a wqo
if and only if
\begin{enumerate}
\item $\leq$ is \emph{well-founded}, i.e.\ there does not exist any
infinite decreasing sequence $x_0> x_1> x_2>\cdots$ of elements
in $A$, where ${<}\eqdef{\leq}\setminus{\geq}$, and
\item there are \emph{no infinite antichains} over $A$, i.e.\ infinite
sets of mutually incomparable elements for $\leq$.
\end{enumerate}

\subsubsection{Well (Partial) Orders\nopunct.}
A wqo where $\leq$ is antisymmetric is called a \emph{well partial
  order} (wpo).  Note that quotienting a wqo by the equivalence
${\equiv}\eqdef{\leq}\cap{\geq}$, i.e.\ equating elements $x$ and $y$
whenever $x\leq y$ and $y\leq x$, yields a wpo.

A wpo $\tup{A,{\leq}}$ where $\leq$ is linear (aka total), is a
\emph{well order} (wo).  Because a wo has antichains of cardinal at
most $1$, this coincides with the usual definition as a well-founded
linear order.  Finally, any \emph{linearisation} of a wpo
$\tup{A,{\leq}}$, i.e.\ any linear order ${\preceq}\supseteq{\leq}$
defines a wo $\tup{A,{\preceq}}$.  One can think of the linearisation
process as one of `orienting' pairs of incomparable elements; such a
linearisation always exists thanks to the order-extension principle.

\subsubsection{Examples\nopunct.}
For a basic example, consider any finite set $Q$ along with the
equality relation, which is a wqo $\tup{Q,{=}}$ (even a wpo) by
the pigeonhole principle.  As explained above, any wo is a wqo, which
provides us with another basic example: the set of natural numbers
along with its natural ordering $\tup{\+N,{\leq}}$.

Many more examples can be constructed using algebraic operations:
for instance, if $\tup{A,{\leq_{A}}}$ and $\tup{B,{\leq_{B}}}$ are
wqos (resp.\ wpos), then so is %
their \emph{Cartesian product} $\tup{A\times B,{\leq_\times}}$,
where $(x,y)\leq_\times(x',y')$ if and only if $x\leq_A x'$ and
$y\leq_B y'$ is the \emph{product ordering}; in the case of
$\tup{\+N^d,{\leq_\times}}$ this result is also known as Dickson's
Lemma.
Some further popular examples of operations that preserve wqos include
the set of finite sequences over $A$ with subword embedding
$\tup{A^\ast,{\leq_\ast}}$ (a result better known as Higman's Lemma),
finite trees labelled by $A$ with the homeomorphic embedding
$\tup{T(A),{\leq_T}}$ (aka Kruskal's Tree Theorem), and finite graphs
labelled by $A$ with the minor ordering
$\tup{G(A),{\leq_{\mathrm{minor}}}}$ (aka Robertson and Seymour's
Graph Minor Theorem).

Turning to well orders, an operation that preserves wos is the
\emph{lexicographic product} $\tup{A\times B,{\lex}}$ where $(x,y)\lex
(x',y')$ if and only if $x<_A x'$, or $x=x'$ and $y\leq_B y'$.  This
is typically employed in $d$-tuples of natural numbers ordered
lexicographically $\tup{\+N^d,{\lex}}$: observe that this is a
linearisation of $\tup{\+N^d,{\leq_\times}}$.  Another classical well
order employed in termination proofs is the \emph{multiset} order
$\tup{\+M(A),{\lem}}$ of \citet{dershowitz79}.  There, $\+M(A)$
denotes the set of finite multisets over the wo $\tup{A,{\leq}}$,
i.e.\ of functions $m{:}\,A\to\+N$ with finitely many $x$ in $A$ such
that $m(x)>0$, and $m\lem m'$ if and only if for all $x$ in $A$, if
$m(x)>m'(x)$, then there exists $y>_A x$ such that
$m(y)<m'(y)$~\citep[see also][]{jouannaud82}.

\subsection{Termination}

We illustrate the main ideas in this paper using a very simple program,
given in pseudo-code in \autoref{algo-simple}.  Formally, we see the
operational semantics of a program as the one in \autoref{algo-simple}
as a transition system $\?S=\tup{\Conf,{\to_\?S}}$ where $\Conf$ denotes
the set of program configurations and ${\to_\?S}\subseteq\Conf\times\Conf$ a
transition relation.  In such a simple non-recursive program, the set
of configurations is a variable valuation, including a program
counter \texttt{pc} ranging over the finite set of program locations.
For our simple program a single location suffices and we set
\begin{equation}
  \Conf=\{\ell_0\}\times\+Z\times\+Z\times\+Z\;,
\end{equation}
where the last three components provide the values of \texttt x,
\texttt y, and \texttt n, and the first component the value of
\texttt{pc}.  The corresponding transition relation contains for
instance
\begin{equation}
  (\ell_0,3,1,4)\to_\?S(\ell_0,2,1,8)
\end{equation}
using transition $a$ in \autoref{fig-prograph}.
\begin{figure}[tbp]
  \begin{subfigure}[b]{.46\textwidth}
    \begin{lstlisting}[mathescape=true,morekeywords={while,do,if,else,then,done},basicstyle=\scriptsize]
$\ell_0$: while x >= 0 and y > 0 do
  if x > 0 then
    $a$: x := x-1; n := 2n;
  else
    $b$: x := n; y := y-1; n := 2n;
done
\end{lstlisting}
    \caption{A program over integer variables.}
    \label{algo-simple}
\end{subfigure}
~~~~~~~~
\begin{subfigure}[b]{.46\textwidth}\centering
 \begin{tikzpicture}
   \node[state](0){$\ell_0$};
   \path[->] (0) edge[loop left] node{\scriptsize\begin{minipage}{6em}
     \hspace{-1em}$a$:\\
     \lstinline{assume(x>0);}\\
     \lstinline{assume(y>0);}\\
     \lstinline{x := x-1;}\\
     \lstinline{n := 2n;}\\\end{minipage}} ()
   (0) edge[loop right] node{\scriptsize\begin{minipage}{6em}
     \hspace{-1em}$b$:\\
     \lstinline{assume(x=0);}\\
     \lstinline{assume(y>0);}\\
     \lstinline{x := n;}\\
     \lstinline{y := y-1;}\\
     \lstinline{n := 2n;}\end{minipage}} ();
   \node[below=.72cm of 0]{};%
 \end{tikzpicture}
 \vfill
 \caption{$\!$The associated control-flow graph.}
 \label{fig-prograph}
\end{subfigure}\ifshort\vspace*{-1.1em}\fi
\caption{A simple terminating program.}\label{fig-prog}\ifshort\vspace*{-1.5em}\fi
\end{figure}
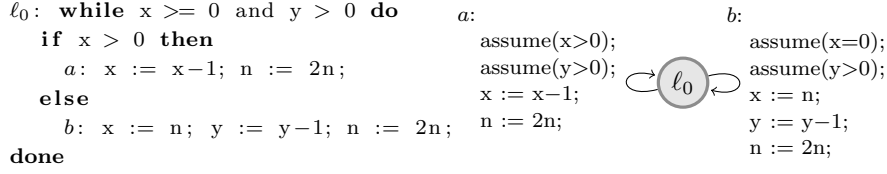

\subsubsection{Proving Termination\nopunct.}
We say that a transition system $\?S=\tup{\Conf,{\to_\?S}}$
\emph{terminates} if every execution $c_0\to_\?S c_1\to_\?S\cdots$ is
finite.  For instance, in order to prove the termination of the
program of \autoref{fig-prog} by a wqo argument, consider some
(possibly infinite) execution
\begin{equation}\label{eq-badprog}
(\ell_0,x_0,y_0,n_0)\to_\?S(\ell_0,x_1,y_1,n_1)\to_\?S(\ell_0,x_2,y_2,n_2)\to_\?S\cdots
\end{equation}
over $\Conf$.  Because a negative value for \texttt x or \texttt y
would lead to immediate termination, the associated sequence of pairs
\begin{equation}\label{eq-badn2}
  (x_0,y_0),(x_1,y_1),(x_2,y_2),\dots
\end{equation}
is actually over $\+N^2$.  Consider now two indices $i<j$:
\begin{itemize}
\item either $b$ is never fired throughout the execution between
  steps~$i$ and~$j$, and then $y_i=\cdots=y_j$ and $x_i>x_j$,
\item or $b$ is fired at least once, and $y_i>y_j$.
\end{itemize}
In both cases $(x_i,y_i)\not\leq_\times(x_j,y_j)$, i.e.\ the sequence
\eqref{eq-badn2} is bad for the product ordering.  Since
$\tup{\+N^2,{\leq_\times}}$ is a wqo, this sequence is necessarily
finite, and so is the original sequence \eqref{eq-badprog}: the
program of \autoref{fig-prog} terminates on all inputs.

\subsubsection{Quasi-Ranking Functions\nopunct.}
The above termination argument for our example program easily generalises:
\begin{definition}%
  \label{def-qrank}
  Given a transition system $\?S=\tup{\Conf,{\to_\?S}}$, a
  \emph{quasi-ranking function} is a map $f{:}\,\Conf\to A$ into a wqo
  $\tup{A,{\leq}}$ such that, whenever $c\to_\?S^+ c'$ is a non-empty
  sequence of transitions of $\?S$, $f(c)\not\leq f(c')$.
\end{definition}\noindent
In our treatment of the program of \autoref{fig-prog} above, we picked
$f(\ell_0,x,y,z)=(x,y)$ and
$\tup{A,{\leq}}=\tup{\+N^2,{\leq_\times}}$.  The existence of a
quasi-ranking function always yields termination:
\begin{proposition}
  Given a transition system $\?S=\tup{\Conf,{\to_\?S}}$, if there
  exists a quasi-ranking function for $\?S$, then $\?S$ terminates.
\end{proposition}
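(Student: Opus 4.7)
The plan is to argue by contradiction, reducing non-termination of $\?S$ to the existence of a bad sequence in the wqo $\tup{A,{\leq}}$, which is forbidden by \autoref{def-wqo}.

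Concretely, suppose for contradiction that $\?S$ does not terminate, so there is an infinite execution
\begin{equation*}
  c_0\to_\?S c_1\to_\?S c_2\to_\?S\cdots
\end{equation*}
over $\Conf$. Let $f{:}\,\Conf\to A$ be the quasi-ranking function provided by the hypothesis, and consider the image sequence $f(c_0),f(c_1),f(c_2),\dots$ over $A$. For any indices $i<j$, we have $c_i\to_\?S^+ c_j$ (a non-empty sequence of transitions), so \autoref{def-qrank} gives $f(c_i)\not\leq f(c_j)$. Hence the infinite sequence $f(c_0),f(c_1),f(c_2),\dots$ is bad, contradicting the fact that $\tup{A,{\leq}}$ is a wqo.

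There is no real obstacle here: the proof is a one-step unfolding of the definitions of wqo, quasi-ranking function, and termination. The only thing to be careful about is checking that the relevant pairs $(i,j)$ in the image sequence correspond to non-empty transition sequences, which is immediate from $i<j$ and the use of $\to_\?S^+$ rather than $\to_\?S^\ast$ in \autoref{def-qrank}.
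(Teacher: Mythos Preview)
Your proof is correct and follows essentially the same approach as the paper's: map an execution to its image under $f$, observe that the quasi-ranking condition makes this sequence bad, and invoke the wqo property to conclude finiteness. The only cosmetic difference is that you phrase it as a proof by contradiction, whereas the paper states it directly.
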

\begin{proof}
  Let $f$ be a quasi-ranking function of $\?S$ into a wqo
  $\tup{A,{\leq}}$.  Any sequence of configurations $c_0\to_\?S
  c_1\to_\?S\cdots$ of $\?S$ is associated by $f$ to a bad sequence
  $f(c_0),f(c_1),\dots$ over $A$ and is therefore finite.
\end{proof}\noindent
Note that the converse statement also holds; see \autoref{rk-omega}
below.

\subsubsection{Ranking Functions\nopunct.}  The most
typical method in order to prove that a program terminates for all
inputs is to exhibit a \emph{ranking function} $f$ into some
well-order, such that $\to_\?S$-related configurations have decreasing
rank~\citep{turing49,floyd67}.  Note that this is a particular
instance of quasi-ranking functions: a ranking function can be seen as
a quasi-ranking function into a wo $\tup{A,{\leq}}$.  Indeed, if
$c\to_\?S c'$, then the condition $f(c)\not\leq f(c')$ of
\autoref{def-qrank} over a wo is equivalent to requiring $f(c)>f(c')$,
and then implies by transitivity $f(c)>f(c')$ whenever $c\to_\?S^+c'$.

The program of \autoref{fig-prog} can easily be given a ranking
function: define for this $f(\ell_0,x,y,n)=(y,x)$ ranging over the wo
$\tup{\+N^2,{\lex}}$.  \Citet*{cook13} and \citet{benamram13} consider
for instance the automatic synthesis of such lexicographic linear
ranking functions for integer loops like \autoref{algo-simple}.  Such
ranking functions into $\tup{\+N^d,{\lex}}$ are described there by $d$
functions $f_1,f_2,\dots,f_d\!:\Conf\to\+N$ such that, whenever
$c\to_\?S c'$, then
$(f_1(c),f_2(c),\dots,f_d(c))>_{\mathrm{lex}}(f_1(c'),f_2(c'),\dots,f_d(c'))$;
in our example $f_1(c)=y$ and $f_2(c)= x$.  Linearity means that each
function $f_i$ is a linear affine function of the values of the
program variables.

\begin{remark}\label{rk-omega}
Observe that any \emph{deterministic} terminating program can be
associated to a \mbox{(quasi-)}ranking function into $\+N$, which maps
each configuration to the number of steps before termination.  We
leave it as an exercise to the reader to figure out such a ranking
function for \autoref{fig-prog}---the answer can be found in
\autoref{sec-upb}.  There are at least two motivations for considering
other wqos:
\begin{itemize}
\item Programs can be nondeterministic, for instance due to
  interactions with an environment.  Then the supremum of the number
  of steps along all the possible paths can be used as the range for a
  ranking function; this is a countable well-order.
\item Whether by automated means or by manual means, such monolithic
  ranking functions are often too hard to synthesise and to check once
  found or guessed---note that the canonical `number of steps'
  function is not recursive in general.  This motivates employing more
  complex well (quasi-)orders in exchange for simpler ranking
  functions.
\end{itemize}
\end{remark}

\subsection{Ordinals}

Write $\tup{[d],\leq}$ for the initial segment of the naturals
$[d]=\{0,\dots,d-1\}$; this is a finite linear order for each $d$.  We
can then replace our previous lexicographic ranking function for
\autoref{fig-prog} with a multiset ranking function into
$\tup{\+M([2]),{\lem}}$: $f(\ell_0,x,y,m)=\{1^y,0^x\}$ is a ranking
function that associates a multiset containing $y$ copies of the
element `$1$' and $x$ copies of `$0$' to the configuration
$(\ell_0,x,y,n)$.

This might seem like a rather artificial example of a multiset ranking
function, and indeed more generally $\tup{\+N^d,{\lex}}$ and
$\tup{\+M([d]),{\lem}}$ are \emph{order-isomorphic} for every
dimension $d$: indeed,
$r(n_1,\dots,n_d)=\{(d-1)^{n_1},\dots,0^{n_d}\}$ is a bijection
satisfying $(n_1,\dots,n_d)\lex(n'_1,\dots,n'_d)$ in if
$\tup{\+N^d,{\lex}}$and only if $r(n_1,\dots,n_d)\lem
r(n'_1,\dots,n'_d)$ in $\tup{\+M([d]),{\lem}}$.

In order to pick a unique representative for each isomorphism class of
(simple enough) well orders, we are going to employ their \emph{order
  types}, presented as \emph{ordinal terms} in Cantor normal form.
For instance $\omega^d$ is the order type of both $\tup{\+N^d,{\lex}}$
and $\tup{\+M([d]),{\lem}}$.

\subsubsection{Ordinals in $\varepsilon_0$\nopunct} can be canonically
represented as \emph{ordinal terms} $\alpha$ %
in Cantor normal form%
\begin{align}
  \alpha &= \omega^{\alpha_1}+\cdots+\omega^{\alpha_p}\tag{CNF}
  \intertext{%
    with \emph{exponents} $\alpha>\alpha_1\geq\cdots\geq\alpha_p$.  We write
    as usual $1$ for the term $\omega^0$ and $\omega$ for the term
    $\omega^1$. %
      Grouping equal exponents %
 yields the strict form}
  \alpha &= \omega^{\alpha_1}\cdot c_1+\cdots+\omega^{\alpha_p}\cdot c_p\notag
\end{align}
with $\alpha>\alpha_1>\cdots>\alpha_p$ and \emph{coefficients}
$0<c_1,\dots,c_p<\omega$.
The ordinal $\varepsilon_0$, i.e.\ the least solution of $\omega^x=x$,
is the supremum of the ordinals presentable in this manner.

\subsubsection{Computing Order Types\nopunct.}
The order types $o(A,{\leq_A})$ of the well orders $\tup{A,{\leq_A}}$
we already mentioned in this paper are well-known: $o([d],{\leq})=d$,
$o(\+N,{\leq})=\omega$, $o(A\times B,{\lex})=o(A,{\leq_A})\cdot
o(B,{\leq_B})$, and $o(\+M(A),{\lem})=\omega^{o(A,{\leq_A})}$.  The
ranking function for the program in \autoref{fig-prog} can now be
written as $f(\ell_0,x,y,n)=\omega\cdot y+x$ and ranges over the set
of ordinal terms below $\omega^2$.  Note that we will identify the
latter set with $\omega^2$ itself as in the usual set-theoretic
definition of ordinals; thus $\beta<\alpha$ if and only if
$\beta\in\alpha$.

By extension, we also write $o(x)$ for the ordinal
term in $o(A)$ associated to an element $x$ in $A$; for instance in
$\tup{\+N^d,{\lex}}$, $o(n_1,\dots,n_d)=\omega^{d-1}\cdot
n_1+\cdots+n_d$.

\section{Complexity Bounds}
\label{sec-upb}
We aim to provide complexity upper bounds for programs proven to
terminate thanks to some (quasi-)ranking function.  There are several
results of this kind in the
literature~\citep{mcaloon,clote,weiermann94,cichon98,figueira11,schmitz11,abriola12},
which are well-suited for algorithms manipulating complex data
structures---for which we can employ the rich wqo toolkit.

A major drawback of all these complexity bounds is that they are
\emph{very} high---i.e., non-elementary except in trivial cases---,
whereas practitioners are mostly interested in polynomial bounds.
Such high complexities are however unavoidable, because the class of
programs terminating thanks to a quasi-ranking function encompasses
programs with matching complexities.  For instance, even integer loops
can be deceivingly simple: recall that the program of
\autoref{fig-prog} terminated using a straightforward ranking function
into $\omega^2$.  Although this is just one notch above a ranking
function into $\omega$, we can already witness fairly complex
computations.  Observe indeed that the following are some execution
steps of our program:
\begin{align*}
  (\ell_0,x,y,1)&\xrightarrow{a^xb}_\?S(\ell_0,2^x,y-1,2^{x+1})\\&\xrightarrow{a^{2^x}b}_\?S(\ell_0,2^{2^x+x+1},y-2,2^{2^x+x+2})\\&\xrightarrow{a^{2^{2^x+x+1}}b}_\?S(\ell_0,2^{2^{2^x+x+1}+2^x+x+2},y-3,2^{2^{2^x+x+1}+2^x+x+3})\;.
\end{align*}
Continuing this execution, we see that our simple program exhibits
executions of length greater than a tower of exponentials in $y$,
i.e.\ it is non elementary.

\subsection{Controlled Ranking Functions}\label{sec-nwqo}
By \autoref{def-wqo}, bad sequences in a wqo are always finite---which
in turn yields the termination of programs with quasi-ranking
functions---, but no statement is made regarding \emph{how long} they
can be.  This is for a very good reason: they can be arbitrarily long.

For instance, over the wo $\tup{\+N,{\leq}}$,
\begin{equation}\label{eq-bad-omega}
  n, n-1, \dots, 1, 0
\end{equation}
is a bad sequence of length $n+1$ for every $n$.  Arguably, this is
not so much of an issue, since what we are really interested in is the
length \emph{as a function} of the initial configuration---which
includes the inputs to the program.  Thus \eqref{eq-bad-omega} is the
maximal bad sequence over $\tup{\+N,{\leq}}$ with initial element of
`size $n$.' 

However, as soon as we move to more complex wqos, we can exhibit
arbitrary bad sequence lengths even with fixed initial configurations.
For instance, over $\tup{\+N^2,{\lex}}$,
\begin{equation}\label{eq-bad-omega2}
  (1,0),(0,n),(0,n-1),\dots,(0,1),(0,0)
\end{equation}
is a bad sequence of length $n+2$ for every $n$ starting from the
fixed $(1,0)$.  Nonetheless, the behaviour of a program exhibiting
such a sequence of ranks is rather unusual: such a sudden `jump' from
$(1,0)$ to an arbitrary $(0,n)$ is not possible in a deterministic
program once the user inputs have been provided.

\subsubsection{Controlled Sequences\nopunct.}
In the following, we will assume that no such arbitrary jump can
occur.  This comes at the price of some loss of generality in the
context of termination analysis, where nondeterministic assignments of
arbitrary values are typically employed to model values provided by
the environment---for instance interactive user inputs or concurrently
running programs---, or because of abstracted operations.  Thankfully,
in most cases it is easy to \emph{control} how large the program
variables can grow during the course of an execution.

Formally, given a wqo $\tup{A,{\leq_A}}$, we posit a \emph{norm}
function $|.|_A{:}\,A\to\+N$ on the elements of $A$.  In order to be
able to derive combinatorial statements, we require
\begin{equation}\label{eq-norm}
  A_{\leq n}\eqdef\{x\in A\mid |x|_A\leq n\}
\end{equation}
to be finite for every $n$.  We will use the following norms on the
wqos defined earlier: in a finite $Q$, all the elements have the same
norm $0$; in $\+N$ or $[d]$, $n$ has norm $|n|_\+N=n$; for Cartesian
or lexicographic products with support $A\times B$, $(x,y)$ has the
infinite norm $\max(|x|_A,|y|_B)$; finally, for multisets $\+M(A)$,
$m$ has norm $\max_{x\in A,m(x)>0}(m(x),|x|_A)$.

Let $g{:}\,\+N\to\+N$ be a monotone and expansive function: for all
$x,x'$, $x\leq x'$ implies $g(x)\leq g(x')$ and $x\leq g(x)$.  We say
that a sequence $x_0,x_1,x_2,\dots$ of elements in $A$ is
\emph{$(g,n_0)$-controlled} for some $n_0$ in $\+N$ if
\begin{equation}\label{eq-control}
  |x_i|_A\leq g^i(n_0)
\end{equation}
for all $i$, where $g^i$ denotes the $i$th iterate of $g$.  In
particular $|x_0|_A\leq g^0(n_0)=n_0$, which prompts the name of
\emph{initial norm} for $n_0$, and amortised steps cannot grow faster
than $g$ the \emph{control function}.

By extension, a quasi-ranking function $f{:}\,\Conf\to A$ for a
transition system $\?S=\tup{\Conf,{\to_\?S}}$ and a normed wqo
$\tup{A,{\leq_A},|.|_A}$ is \emph{$g$-controlled} if, whenever
$c\to_\?S c'$ is a transition in $\?S$, 
\begin{equation}\label{eq-crank}
  |f(c')|_A\leq g(|f(c)|_A)\;.
\end{equation}
This ensures that any sequence $f(c_0),f(c_1),\dots$ of
ranks associated to an execution $c_0\to_\?S c_1\to_\?S\cdots$ of
$\?S$ is $(g,|f(c_0)|_A)$-controlled.  For instance, our ranking
function $f(\ell_0,x,y,n)=(y,x)$ for the program of \autoref{fig-prog}
into $\tup{\+N^2,{\lex}}$ is $g$-controlled for $g(x)=2x$.

\subsubsection{Length Functions\nopunct.}
The motivation for controlled sequences is that their length can be
bounded.  Consider for this the tree one obtains by sharing common
prefixes of all the $(g,n_0)$-controlled bad sequences over a normed
wqo $(A,{\leq_A},|.|_A)$.  This tree has
\begin{itemize}
\item finite branching by \eqref{eq-norm} and \eqref{eq-control}, more precisely
  branching degree bounded by the cardinal of $A_{\leq g^i(n_0)}$ for
  a node at depth $i$, and
\item no infinite branches thanks to the wqo property.
\end{itemize}
By K\H{o}nig's Lemma, this tree of bad sequences is therefore finite,
of some height $L_{g,n_0,A}$ representing the length of the maximal $(g,
n_0)$-controlled bad sequence(s) over $A$. In the following, since we
are mostly interested in this length as a function of the initial norm
$n_0$, we will see this as a \emph{length function} $L_{g,A}(n_0)$.

\subsubsection{Length Function Theorems\nopunct.}
Observe that $L_{g,A}$ also bounds the asymptotic execution length in
a program endowed with a $g$-controlled quasi-ranking function into
$\tup{A,{\leq_A},|.|_A}$.  Our purpose will thus be to obtain explicit
complexity bounds on $L_{g,A}$ depending on $g$ and $A$.  We call such
combinatorial statements \emph{length function theorems}; see
\citep{mcaloon,clote,weiermann94,cichon98,figueira11,schmitz11,abriola12}
for some examples.

For applications to termination analysis, we are especially interested
in the case of well orders.  Somewhat oddly, this particular case has
seldom been considered; to our knowledge the only instance is due to
\citet*{abriola12} who derive upper bounds for multisets of tuples of
naturals ordered lexicographically, i.e.\ for $L_{g,\+M(\+N^d)}$
(beware that their notion of control is defined slightly
differently).

\subsection{Hardy and Cicho\'n Hierarchies}\label{sec-subrec}
As we saw with the example of \autoref{fig-prog}, even simple
terminating programs can have a very high complexity.  In order to
express such high bounds, a convenient tool is found in
\emph{subrecursive hierarchies}, which employ recursion over ordinal
indices to define faster and faster growing functions.  We define in
this section two such hierarchies.

\subsubsection{Fundamental Sequences and Predecessors\nopunct.}
Let us first introduce some additional notions on ordinal terms.
Consider an ordinal term $\alpha$ in Cantor normal form
$\omega^{\alpha_1}+\cdots+\omega^{\alpha_p}$.  In this representation,
$\alpha=0$ if and only if $p=0$.  An ordinal $\alpha$ of the form
$\alpha'+1$ (i.e.\ with $p>0$ and $\alpha_p=0$) is called
a \emph{successor} ordinal, and otherwise if $\alpha>0$ it is called
a \emph{limit} ordinal, and can be written as $\gamma+\omega^\beta$ by
setting $\gamma=\omega^{\alpha_1}+\cdots +\omega^{\alpha_{p-1}}$ and
$\beta=\alpha_p$.  We usually write `$\lambda$' to denote a limit
ordinal.

A \emph{fundamental sequence} for a limit ordinal $\lambda$ is a
sequence $(\lambda(x))_{x<\omega}$ of ordinal terms with supremum
$\lambda$.  We use the standard assignment of fundamental sequences to
limit ordinals defined inductively by
\begin{align}
  \label{eq-fund-def}
  (\gamma+\omega^{\beta+1})(x)&\eqdef\gamma+\omega^\beta\cdot (x+1)\;,&
  (\gamma+\omega^{\lambda})(x)&\eqdef\gamma+\omega^{\lambda(x)}\;.
\end{align}
This particular assignment satisfies e.g.\ $0<\lambda(x)<\lambda(y)$
for all $x<y$.  For instance, $\omega(x)=x+1$,
$(\omega^{\omega^4}+\omega^{\omega^3+\omega^2})(x)=\omega^{\omega^4}+\omega^{\omega^3+\omega\cdot
(x+1)}$.

The \emph{predecessor} $P_x(\alpha)$ of an ordinal term $\alpha>0$ at
a value $x$ in $\+N$ is defined inductively by
\begin{align}\label{eq-pred}
  P_x(\alpha+1)&\eqdef\alpha\;,&
  P_x(\lambda)&\eqdef P_x(\lambda(x))\;.
\end{align}
In essence, the predecessor of an ordinal is obtained by repeatedly
taking the $x$th element in the fundamental sequence of limit
ordinals, until we finally reach a successor ordinal and remove $1$.
For instance, $P_x(\omega^2)=P_x(\omega\cdot(x+1))=P_x(\omega\cdot
x+x+1)=\omega\cdot x+x$.

\subsubsection{Subrecursive Hierarchies\nopunct.}  In the context
of controlled sequences, the hierarchies of Hardy and Cicho\'n turn
out to be especially well-suited~\citep{cichon98}.  Let
$h{:}\,\+N\to\+N$ be a function.  The \emph{Hardy hierarchy}
$(h^\alpha)_{\alpha\in\varepsilon_0}$ is defined for all
$0<\alpha<\varepsilon_0$ by\footnote{Note that this is equivalent to
defining $h^{\alpha+1}(x)\eqdef h^\alpha(h(x))$ and
$h^{\lambda}(x)\eqdef h^{\lambda(x)}(x)$.}
\begin{align}\label{eq-hardy}
  h^0(x)&\eqdef x\;,&
  h^{\alpha}(x)&\eqdef h^{P_x(\alpha)}(h(x))\;,
\intertext{and the \emph{Cicho\'n hierarchy}
  $(h_\alpha)_{\alpha\in\varepsilon_0}$ is similarly defined for all
  $0<\alpha<\varepsilon_0$ by}\label{eq-cichon}
  h_0(x)&\eqdef 0\;,&
  h_{\alpha}(x)&\eqdef 1+h_{P_x(\alpha)}(h(x))\;.
\end{align}
Observe that $h^k$ for some finite $k$ is the $k$th iterate of $h$.
This intuition carries over: $h^\alpha$ is a transfinite iteration of
the function $h$, using diagonalisation in the fundamental sequences
to handle limit ordinals.

For instance, starting with the successor function $H(x)\eqdef x+1$,
we see that a first diagonalisation yields
$H^\omega(x)=H^{x}(x+1)=2x+1$.  The next diagonalisation occurs at
$H^{\omega\cdot 2}(x)=H^{\omega+x}(x+1)=H^{\omega}(2x+1)=4x+3$.
Fast-forwarding a bit, we get for instance a function of exponential
growth $H^{\omega^2}(x)=2^{x+1}(x+1)-1$, and later a non-elementary
function $H^{\omega^3}$, an `Ackermannian' non primitive-recursive
function $H^{\omega^\omega}$, and a `hyper-Ackermannian' non multiply
recursive-function $H^{\omega^{\omega^\omega}}$.  Regarding the
Cicho\'n functions, an easy induction on $\alpha$ shows that
$H^\alpha(x)=H_\alpha(x)+x$.

On the one hand, Hardy functions are well-suited for expressing large
iterates of a control function, and therefore for bounding the norms
of elements in a controlled sequence.  For instance, the program in
\autoref{fig-prog} computes $g^{\omega\cdot y+x}(n)$ for the function
$g(x)=2x$ when run on non-negative inputs $x,y,n$.  On the other hand,
Cicho\'n functions are well-suited for expressing the length of
controlled sequences.  For instance, $g_{\omega\cdot y+x}(n)$ is the
length of the execution of the program.  This relation is a general
one: we can compute how many times we should iterate $h$ in order to
compute $h^\alpha(x)$ using the corresponding Cicho\'n function:
\begin{equation}\label{eq-hardy-cichon}
  h^\alpha(x) = h^{h_\alpha(x)}(x)\;.
\end{equation}

\subsubsection{Monotonicity Properties\nopunct.}
Assume $h$ is monotone and expansive.  Then both $h^\alpha$ and
$h_\alpha$ are monotone and
expansive~\citep[see][]{cichon98,aawqo,sw12}.  However, those
hierarchies are not monotone in the ordinal indices: for instance,
$H^\omega(x)=2x+1<2x+2=H^{x+2}(x)$ although $\omega>x+2$.

Some refinement of the ordinal ordering is needed in order to obtain
monotonicity of the hierarchies.  Define for this the \emph{pointwise
  ordering} $\prec_x$ at some $x$ in $\+N$ as the smallest transitive
relation such that
\begin{align}\label{eq-pointwise}
  \alpha&\prec_x\alpha+1\;, &\lambda(x)&\prec_x\lambda\;.
\end{align}
The relation `$\beta\prec_x\alpha$' is also noted
`$\beta\in\alpha[x]$' in \citep[pp.~158--163]{sw12}.
The $\prec_x$ relations form a strict hierarchy of refinements of the
ordinal ordering $<$:
\begin{equation}\label{eq-pointw-strict}
  {\prec_0}\subsetneq{\prec_1}\subsetneq \cdots\subsetneq
  {\prec_x}\subsetneq\cdots\subsetneq {<}\;.
\end{equation}
As desired, our hierarchies are monotone for the pointwise
ordering~\citep{cichon98,aawqo,sw12}:
\begin{align}%
  \label{eq-pointwise-mon}
  \beta&\prec_x\alpha&\text{ implies }&&h_\beta(x)&\leq h_\alpha(x)\;.
\end{align}

\subsubsection{Ordinal Norms\nopunct.}
As a first application of the pointwise ordering, define the
\emph{norm} of an ordinal as the maximal coefficient that appears
in its associated CNF: if $\alpha=\omega^{\alpha_1}\cdot
c_1+\cdots+\omega^{\alpha_p}\cdot c_p$ with $\alpha_1>\cdots>\alpha_p$
and $c_1,\dots,c_p>0$, then
\begin{equation}\label{eq-def-norm}
  N\alpha\eqdef\max\{c_1,\dots,c_p,N\alpha_1,\dots,N\alpha_p\}\;.
\end{equation}
Observe that this definition essentially matches the previously
defined norms over multisets and tuples of vectors: e.g.\ in
$\tup{\+N^d,{\lex}}$, the ordinal norm satisfies
$No(n_1,\dots,n_d)=\max(d,|(n_1,\dots,n_d)|_{\+N^d})$, and in
$\tup{\+M(\+N^d),{\lem}}$, $No(m)=\max(d,|m|_{\+M(\+N^d)})$.
The relation between ordinal norms and the pointwise ordering is
that~\citep[p.~158]{aawqo,sw12}
\begin{align}\label{eq-pointw-order}
  \beta&<\alpha&\text{ implies }&&\beta&\prec_{N\beta}\alpha\;.
\end{align}
Together with \eqref{eq-pointw-strict} and \eqref{eq-pointwise-mon},
this entails that for all $x\geq N\beta$, $h_\beta(x)\leq
h_\alpha(x)$.

\subsection{A Length Function Theorem for
{\boldmath$\varepsilon_0$}}\label{sec-lft}
We are now equipped to prove a length function theorem for all
ordinals $\alpha$ below $\varepsilon_0$, i.e.\ an explicit expression
for $L_{g,\alpha}$ for the wo $\tup{\alpha,{\leq},N}$.  This proof
relies on two main ingredients: a \emph{descent equation} established
in~\citep{schmitz11} for all normed wqos, and an alternative
characterisation of the Cicho\'n hierarchy in terms of maximisations
inspired by~\citep{cichon92,buchholz94}.

\subsubsection{Residuals and a Descent Equation\nopunct.}
Let $\tup{A,{\leq},|.|_A}$ be a normed wqo and $x$ be an element of
$A$.  We write 
\begin{equation}\label{eq-residual}
  A/x\eqdef\{y\in A\mid x\not\leq y\}
\end{equation}
for the \emph{residual} of $A$ in $x$.  Observe that by the wqo
property, there cannot be infinite sequences of residuations
$A/x_0/x_1/x_2/\cdots$ since $x_i\not\leq x_j$ for all $i<j$.

Consider now a $(g,n_0)$-controlled bad sequence $x_0,x_1,x_2,\dots$
over $\tup{A,{\leq},|.|_A}$.  Assuming the sequence is not empty, then
because this is a bad sequence we see that for all $i>0$, $x_0\not\leq
x_i$, i.e.\ that the suffix $x_1,x_2,\dots$ is actually a bad sequence
over $A/x_0$.  This suffix is now $(g(n),n_0)$-controlled, and thus of
length bounded by $L_{g,A/x_0}(g(n_0))$.  This yields the
following \emph{descent equation} when considering all the possible
$(g,n)$-controlled bad sequences:
\begin{equation}\label{eq-descent}
  L_{g,A}(n)=\max_{x\in A_{\leq n}}1+L_{g,A/x}(g(n))\;.
\end{equation}

In the case of a wo $\tup{\alpha,{\leq},N}$, residuals can be
expressed more simply for $\beta\in\alpha$ as
\begin{align}
 \alpha/\beta &= \{\gamma\in\alpha\mid\beta>\gamma\}=\beta\;.
 \intertext{Thus the descent equation simplifies into}
 \label{eq-ordescent}
 L_{g,\alpha}(n) &= \max_{\beta<\alpha, N\beta\leq n}1+L_{g,\beta}(g(n))\;. 
\end{align}

\subsubsection{Norm Maximisation\nopunct.}
The reader might have noticed a slight resemblance between the ordinal
descent equation~\eqref{eq-ordescent} and the definition of the
Cicho\'n hierarchy~\eqref{eq-cichon}.  It turns out that they are
essentially the same functions: indeed, we are going to show
in \autoref{prop-hardy-norm} that if $N\alpha\leq x$, then choosing
$\beta=P_x(\alpha)$ maximises $h_\beta(h(x))$ among those
$\beta<\alpha$ with $N\beta\leq x$; we follow in
this~\citep{cichon92,buchholz94}.  This is a somewhat technical proof,
so the reader might want to skip the details and jump directly
to \autoref{th-lft}.

\begin{lemma}\label{lem-pred-norm}
  Let $\alpha<\varepsilon_0$ and $x\geq N\alpha$.  Then $P_x(\alpha)=\max_{\beta<\alpha,N\beta\leq x}\beta$.
\end{lemma}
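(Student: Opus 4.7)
The plan is to proceed by transfinite induction on $\alpha$. In the successor case $\alpha = \alpha' + 1$, definition~\eqref{eq-pred} gives $P_x(\alpha) = \alpha'$; the CNF of $\alpha'$ differs from that of $\alpha$ only by decrementing (or dropping) the constant coefficient, so $N\alpha' \leq N\alpha \leq x$, and since every $\beta < \alpha'+1$ satisfies $\beta \leq \alpha'$, this identifies $\alpha'$ as the claimed maximum.

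The real work lies in the limit case $\alpha = \lambda$. The key auxiliary claim, which I would establish first by a separate structural induction on $\lambda$, is that \emph{no ordinal $\mu$ with $\lambda(x) \leq \mu < \lambda$ has norm at most $x$}. When $\lambda = \gamma + \omega^{\nu+1}$, any such $\mu$ has the form $\gamma + \omega^\nu \cdot c + \xi$ with $\xi < \omega^\nu$ and $c \geq x+1$, forcing $N\mu \geq c > x$; when $\lambda = \gamma + \omega^\nu$ with $\nu$ itself a limit, the leading exponent of the tail of $\mu$ past $\gamma$ lies in $[\nu(x),\nu)$, and the induction hypothesis applied to $\nu$ (which is a limit with $N\nu \leq N\lambda \leq x$) yields the bound. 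An immediate consequence is $\{\beta<\lambda : N\beta\leq x\} = \{\beta<\lambda(x) : N\beta\leq x\}$, reducing the target identity to $P_x(\lambda(x)) = \max\{\beta<\lambda(x) : N\beta\leq x\}$.

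The second ingredient is the \emph{additive linearity} of $P_x$: whenever the CNF exponents of $\gamma$ strictly exceed those of $\alpha'$, $P_x(\gamma+\alpha') = \gamma + P_x(\alpha')$. This follows by an easy induction on $\alpha'$ from the identity $(\gamma+\lambda')(x) = \gamma + \lambda'(x)$ for limits $\lambda'$. Combining both ingredients, writing $\lambda = \gamma + \omega^\nu$ reduces the task to computing $P_x(\omega^\nu)$ for $\nu > 0$ with $N\nu \leq x$, which I would handle by a nested induction on $\nu$: when $\nu = \nu'+1$, unfolding the fundamental sequence and applying linearity yields $P_x(\omega^\nu) = \omega^{\nu'}\cdot x + P_x(\omega^{\nu'})$, which one checks matches the largest CNF below $\omega^{\nu'+1}$ of norm at most $x$; when $\nu$ is a limit, the auxiliary claim applied to $\omega^\nu$ lets me replace $\omega^\nu$ by $\omega^{\nu(x)}$ and recurse.

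The principal obstacle is that $\lambda(x)$ and $\omega^{\nu(x)}$ both carry norm exactly $x+1$, so the outer induction hypothesis cannot be invoked on them directly. The auxiliary claim is what sidesteps this: it guarantees that descending into the fundamental sequence never skips past any admissible $\beta$, so the recursion eventually bottoms out at a successor step where the linearity lemma discharges the computation. The rest is careful bookkeeping of CNF coefficients, matching the recursive definition of $P_x$ with the structure of the set of ordinals of norm at most $x$ below $\alpha$.
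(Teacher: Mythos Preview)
Your approach is genuinely different from the paper's. The paper does not run a single transfinite induction on $\alpha$; instead it proves separately that $P_x(\alpha)<\alpha$, that $NP_x(\alpha)\leq x$, and that every $\beta<\alpha$ with $N\beta\leq x$ satisfies $\beta\preceq_x P_x(\alpha)$, the latter two via an ad~hoc notion of \emph{almost $x$-lean} ordinals together with the pointwise ordering~$\prec_x$. Your auxiliary claim---that the interval $[\lambda(x),\lambda)$ contains no ordinal of norm at most $x$---is correct (and in fact holds for \emph{every} limit $\lambda$; the parenthetical ``$N\nu\leq N\lambda\leq x$'' in your sketch is not needed for its proof). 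It neatly replaces the paper's use of $\prec_x$ for the maximality half of the lemma, which is a real simplification.

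There is, however, a genuine gap in the other half, namely in your nested induction on~$\nu$. The statement ``$P_x(\omega^\nu)=\max\{\beta<\omega^\nu:N\beta\leq x\}$'' must carry the hypothesis $N\nu\leq x$: without it the statement is false (for $\nu=x+2$ one has $P_x(\omega^{x+2})=\omega^{x+1}\cdot x+\omega^{x}\cdot x+\cdots+x$, of norm $x+1$, whereas the true maximum is $\omega^{x}\cdot x+\cdots+x$). But with that hypothesis, you cannot recurse on $\nu(x)$ in the limit step, since $N\nu(x)=x+1$. Your final paragraph asserts that the auxiliary claim ``sidesteps'' this obstacle, but the auxiliary claim only controls the \emph{maximum} side; you still owe the fact $NP_x(\omega^{\nu(x)})\leq x$, and that is precisely where the hypothesis fails. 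This is exactly the difficulty that the paper's ``almost $x$-lean'' invariant is engineered to survive: it is preserved under $\lambda\mapsto\lambda(x)$ while still strong enough to force $NP_x\leq x$. A cleaner repair within your own framework is to abandon the nested induction and instead invoke the \emph{outer} hypothesis twice: first on $\nu<\alpha$ (giving $NP_x(\nu)\leq x$), then observe that iterating the fundamental sequence on the exponent yields $P_x(\omega^\nu)=\omega^{P_x(\nu)}\cdot x+P_x(\omega^{P_x(\nu)})$, and finally apply the outer hypothesis to $\omega^{P_x(\nu)}<\alpha$.
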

\begin{proof}
  We prove the lemma through a sequence of claims.
  \begin{claim}\label{cl-pred-order}
     $P_x(\alpha)<\alpha$.
  \end{claim}\noindent
  We show for this first claim that, by transfinite induction over
  $\alpha>0$, for all $x$
  \begin{equation}
    P_x(\alpha)\prec_x\alpha
  \end{equation}
  Indeed, $P_x(\alpha+1)=\alpha\prec_x\alpha+1$ for the successor
  case, and $P_x(\lambda)=P_x(\lambda(x))\prec_x\lambda(x)\prec_x$ by
  induction hypothesis on $\lambda(x)<\lambda$ for the limit case.
  Then~\eqref{eq-pointw-strict} allows to conclude.
  \medskip

  Let us introduce a variant of the ordinal norm.  Let
  $\alpha=\omega^{\alpha_1}\cdot c_1+\cdots+\omega^{\alpha_p}\cdot c_p$
  be an ordinal in CNF with $\alpha>\alpha_1>\cdots>\alpha_p$ and
  $\omega>c_1,\ldots,c_p>0$.  We say that $\alpha$ is \emph{almost
  $x$-lean} if either (i)~$c_p=x+1$ and both
  $N\sum_{i<p}\omega^{\alpha_i}\leq x$ and $N\alpha_p\leq x$, or
  (ii)~$c_p\leq x$, $N\sum_{i<m}\omega^{\alpha_i}\leq x$, and
  $\alpha_p$ is almost $x$-lean.  Note that an almost $x$-lean ordinal
  $\alpha$ has \emph{not} norm $x$; it has however norm $x+1$.  Here
  are several properties of note for almost $x$-lean ordinals: 
  \begin{claim}\label{cl-almost-lean}
    If $N\lambda\leq x$, then $\lambda(x)$ is almost $x$-lean.
  \end{claim}\noindent
  We prove this claim by induction on $\lambda$, letting
  $\lambda=\omega^{\lambda_1}\cdot c_1+\cdots+\omega^{\lambda_p}\cdot c_p$
  as above, where necessarily $N\lambda_p\leq x$.  If $\lambda_p$ is a
  successor ordinal $\beta+1$ (and thus
  $N\beta\leq x$), $\lambda(x)=\omega^{\lambda_1}\cdot
  c_1+\cdots+\omega^{\lambda_p}\cdot (c_p-1)+\omega^{\beta}\cdot (x+1)$
  is almost $x$-lean by case~(i).  If $\lambda_p$ is a limit ordinal,
  $\lambda(x)=\omega^{\lambda_1}\cdot c_1+\cdots+\omega^{\lambda_p}\cdot
  (c_p-1)+\omega^{\lambda_p(x)}$ is $x$-lean by case~(ii) and the
  induction hypothesis on $\lambda_p<\lambda$.
  
  \begin{claim}\label{cl-almost-succ}
    If $\alpha+1$ is almost $x$-lean, then $N\alpha\leq x$.
  \end{claim}\noindent
  Let $\alpha+1=\omega^{\alpha_1}\cdot
  c_1+\cdots+\omega^{\alpha_p}\cdot c_p$ with $\alpha_p=0$.
  We must be in case~(i) since $\alpha_p=0$ cannot be $x$-lean, thus
    $c_p=x+1$ and $N\alpha=N\omega^{\alpha_1}\cdot
    c_1+\cdots+\omega^{\alpha_p}\cdot(c_p-1)\leq x$.
  
  \begin{claim}\label{cl-almost-limit}
    If $\lambda$ is almost $x$-lean, then $\lambda(x)$ is almost
    $x$-lean.
  \end{claim}\noindent
  We prove the claim by induction on $\lambda$, letting
    $\lambda=\omega^{\lambda_1}\cdot
    c_1+\cdots+\omega^{\lambda_p}\cdot c_p$:
  \begin{description}
  \item[{\boldmath If $\lambda_p$ is a successor ordinal $\beta+1$}]
  $\lambda(x)=\omega^{\lambda_1}\cdot c_1+\cdots+\omega^{\lambda_p}\cdot
  (c_p-1)+\omega^{\beta}\cdot (x+1)$, and either (i)~$c_p=x+1$ and
    $N\lambda_p\leq x$, and then $\lambda(x)$ also verifies~(i), or
  (ii)~$c_p\leq x$
  and $\beta+1$ is almost $x$-lean and thus $N\beta\leq x$
    by \autoref{cl-almost-succ}, and $\lambda(x)$ is again almost
    $x$-lean verifying condition~(i).
  \item[{\boldmath If $\lambda_p$ is a limit ordinal}] then
  $\lambda(x)=\omega^{\lambda_1}\cdot c_1+\cdots+\omega^{\lambda_p}\cdot
  (c_p-1)+\omega^{\lambda_p(x)}$.  Either (i)~$c_p=x+1$ and
    $N\lambda_p\leq x$, and by \autoref{cl-almost-lean} $\lambda_p(x)$
    is almost  $x$-lean and thus $\lambda(x)$ is almost $x$-lean by
    condition~(ii), or (ii)~$c_p\leq x$ and $\lambda_p$ is almost
    $x$-lean, and by induction hypothesis $\lambda_p(x)$ is almost
    $x$-lean, and therefore $\lambda(x)$ is again almost $x$-lean by
    condition~(ii).
  \end{description}

  \begin{claim}\label{cl-almost-pred}
    If $\alpha$ is almost $x$-lean, then $NP_x(\alpha)\leq x$.
  \end{claim}\noindent
  By induction over $\alpha>0$: we see for the successor case
  that $NP_x(\alpha+1)=N\alpha\leq x$ by \autoref{cl-almost-succ}, and
  for the limit case that $\lambda(x)$ is almost $x$-lean
  by \autoref{cl-almost-limit} and thus $P_x(\lambda(x))\leq x$ by
  induction hypothesis.

  \begin{claim}\label{cl-pred-norm}
    If $N\alpha\leq x$, then $NP_x(\alpha)\leq x$.
  \end{claim}\noindent
  Indeed, either $\alpha$ is a successor and this is immediate, or it
  is a limit $\lambda$ and then $\lambda(x)$ is almost $x$-lean
    by \autoref{cl-almost-lean} and therefore
    $NP_x(\lambda)=NP_x(\lambda(x))\leq x$
    by \autoref{cl-almost-pred}.

  \begin{claim}\label{cl-pred-max}
    If $\beta<\alpha$ and $N\beta\leq x$, then $\beta\preceq_x
    P_x(\alpha)$.
  \end{claim}\noindent
  Because the hypotheses entail $\beta\prec_x\alpha$
  by \eqref{eq-pointw-order}, we can consider a sequence of atomic
  steps according to \eqref{eq-pointwise} for the pointwise ordering:
  $\beta=\beta_n\prec_x\cdots\prec_x\beta_1\prec_x\alpha$.  If
  $\alpha$ is a successor, then $\beta\preceq_x\beta_1=P_x(\alpha)$.
  Otherwise $\beta_1$ is almost $x$-lean by \autoref{cl-almost-lean}.
  Because $N\beta\leq x$, $\beta$ is not almost $x$-lean, and
  by \autoref{cl-almost-succ} and \autoref{cl-almost-limit} there must
  be a greatest index $1\leq i< n$ such that all the $\beta_j$'s for
  $1\leq j<i$ are almost $x$-lean limit ordinals and $\beta_i$ is a
  successor almost $x$-lean ordinal.  Thus
  $\beta\preceq_x\beta_{i+1}=P_x(\alpha)$.  \medskip

  To conclude the proof, $P_x(\alpha)<\alpha$
  by \autoref{cl-pred-order}, $NP_x(\alpha)\leq x$
  by \autoref{cl-pred-norm}, and if $\beta<\alpha$ is such that
  $N\beta\leq x$, then $\beta\leq P_x(\alpha)$
  by \autoref{cl-pred-max} and \eqref{eq-pointw-strict}, which
  together prove the lemma.
\end{proof}

\begin{proposition}\label{prop-hardy-norm}
  Let $\alpha<\varepsilon_0$ and $x\geq N\alpha$.  Then
  $h_{\alpha}(x)=\max_{\beta<\alpha,N\beta\leq x}1+h_\beta(h(x))$.
\end{proposition}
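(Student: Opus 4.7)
The plan is to combine the definition $h_\alpha(x) = 1 + h_{P_x(\alpha)}(h(x))$ from \eqref{eq-cichon} with \autoref{lem-pred-norm} and the monotonicity properties of the Cicho\'n hierarchy, so that the proof reduces to checking that the maximum in the right-hand side is realized precisely by $\beta = P_x(\alpha)$.

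First I would observe that $P_x(\alpha)$ is itself a candidate in the maximization: \autoref{cl-pred-order} (established inside the proof of \autoref{lem-pred-norm}) gives $P_x(\alpha) < \alpha$, and \autoref{cl-pred-norm} gives $NP_x(\alpha) \leq x$. This shows the right-hand side is $\geq 1 + h_{P_x(\alpha)}(h(x)) = h_\alpha(x)$, so the nontrivial direction is the upper bound.

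For the upper bound, pick any $\beta < \alpha$ with $N\beta \leq x$. By \autoref{lem-pred-norm}, $\beta \leq P_x(\alpha)$; if the inequality is strict, then \eqref{eq-pointw-order} gives $\beta \prec_{N\beta} P_x(\alpha)$, and since $h$ is monotone and expansive we have $N\beta \leq x \leq h(x)$, so \eqref{eq-pointw-strict} yields $\beta \prec_{h(x)} P_x(\alpha)$. Applying \eqref{eq-pointwise-mon} at the point $h(x)$ gives $h_\beta(h(x)) \leq h_{P_x(\alpha)}(h(x))$, whence $1 + h_\beta(h(x)) \leq h_\alpha(x)$ by \eqref{eq-cichon}. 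The case $\beta = P_x(\alpha)$ is immediate.

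The only slightly delicate point is to ensure the norm condition needed to invoke monotonicity of the Cicho\'n hierarchy: this is where the hypothesis $x \geq N\alpha$ is consumed, through \autoref{cl-pred-norm} and the expansiveness of $h$ so that $h(x) \geq N\beta$. No additional combinatorial work is needed, since \autoref{lem-pred-norm} has already done the heavy lifting of identifying $P_x(\alpha)$ as the maximal $\beta < \alpha$ of norm $\leq x$ for the ordinal ordering.
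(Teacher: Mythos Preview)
Your argument is essentially the same as the paper's: both directions are established by showing that $P_x(\alpha)$ is an admissible candidate in the maximization (giving $h_\alpha(x)\leq$ RHS) and then that every other candidate $\beta$ satisfies $\beta\leq P_x(\alpha)$ by \autoref{lem-pred-norm}, whence $h_\beta(h(x))\leq h_{P_x(\alpha)}(h(x))$ via \eqref{eq-pointw-order}, \eqref{eq-pointw-strict}, and \eqref{eq-pointwise-mon}. Your chain $N\beta\leq x\leq h(x)$ is in fact slightly more explicit than the paper's compressed invocation of \eqref{eq-pointw-order}. The only omission is the base case $\alpha=0$: there $P_x(\alpha)$ is undefined and the identity $h_\alpha(x)=1+h_{P_x(\alpha)}(h(x))$ you start from does not apply; the paper disposes of this case in one line by noting that the maximum over an empty index set is $0=h_0(x)$.
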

\begin{proof}
  If $\alpha=0$ then there are no $\beta<\alpha$ and
  $\max_{\beta<\alpha,N\beta\leq x}1+h_\beta(h(x))=0=h_0(x)$.

  Otherwise by \autoref{lem-pred-norm}, since $P_x(\alpha)<\alpha$ and
  $NP_x(\alpha)\leq x$,
  $h_\alpha(x)=1+h_{P_x(\alpha)}(h(x))\leq\max_{\beta<\alpha,N\beta\leq
  x}1+h_\beta(h(x))$.  Conversely, let $\beta<\alpha$ with $N\beta\leq
  x$ be such that $\max_{\beta<\alpha,N\beta\leq
  x}1+h_\beta(h(x))=1+h_\beta(h(x))$.  By \autoref{lem-pred-norm},
  $\beta\leq P_x(\alpha)$ and therefore by \eqref{eq-pointw-order}
  $\beta\preceq_x P_x(\alpha)$.  Since $h$ is expansive,
  by \eqref{eq-pointw-strict}, $\beta\preceq_{h(x)}P_x(\alpha)$.
  Therefore by \eqref{eq-pointwise-mon}, $1+h_\beta(h(x))\leq
  1+h_{P_x(\alpha)}(h(x))=h_\alpha(x)$.
\end{proof}

\begin{theorem}[Length Function Theorem for Ordinals]\label{th-lft}
  Let $\alpha<\varepsilon_0$ and $x\geq N\alpha$.  Then
  $L_{g,\alpha}(x)=g_\alpha(x)$.
\end{theorem}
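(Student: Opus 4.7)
The plan is a transfinite induction on $\alpha < \varepsilon_0$, combining two ingredients already in hand: the descent equation \eqref{eq-ordescent} and the norm-maximisation characterisation of the Cicho\'n functions given by \autoref{prop-hardy-norm}. Both have the same shape, $\max_{\beta<\alpha,N\beta\leq x}1+F_\beta(g(x))$, so once the descent equation is applied, matching the two is essentially formal.

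For the base case $\alpha=0$, the carrier $\alpha$ is empty, so the only $(g,x)$-controlled bad sequence over $\tup{0,\leq,N}$ is the empty one, giving $L_{g,0}(x)=0=g_0(x)$ by definition \eqref{eq-cichon}. For the induction step with $\alpha>0$, I would first invoke \eqref{eq-ordescent} to write
\[
L_{g,\alpha}(x)=\max_{\beta<\alpha,\,N\beta\leq x}1+L_{g,\beta}(g(x))\;,
\]
noting that the maximum is over a non-empty set since $\beta=0$ always satisfies $N\beta=0\leq x$. For each such $\beta$, monotonicity and expansiveness of $g$ give $g(x)\geq x\geq N\beta$, so the induction hypothesis applies to yield $L_{g,\beta}(g(x))=g_\beta(g(x))$.

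Substituting and then applying \autoref{prop-hardy-norm} with $h=g$ (the hypothesis $x\geq N\alpha$ is exactly the assumption of the theorem) gives
\[
L_{g,\alpha}(x)=\max_{\beta<\alpha,\,N\beta\leq x}1+g_\beta(g(x))=g_\alpha(x)\;,
\]
which closes the induction.

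There is really no hard step left, since the heavy lifting has been absorbed into \autoref{lem-pred-norm} and \autoref{prop-hardy-norm}; the only subtlety to check is that the induction hypothesis is invoked at the argument $g(x)$, which requires $g(x)\geq N\beta$, and this is precisely why the controlled-sequence framework needs $g$ expansive. I would make sure to state the induction hypothesis uniformly in the argument (``for all $\beta<\alpha$ and all $y\geq N\beta$, $L_{g,\beta}(y)=g_\beta(y)$'') so that this quantification over $y$ is visible when substituting $y=g(x)$.
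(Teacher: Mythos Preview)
Your proof is correct and follows exactly the route the paper takes: invoking the ordinal descent equation \eqref{eq-ordescent} and \autoref{prop-hardy-norm}, glued together by a transfinite induction on $\alpha$. The paper's own proof compresses this into a single sentence, but your explicit treatment of the induction---including the check that $g(x)\geq N\beta$ so the induction hypothesis applies at the shifted argument---spells out precisely what that sentence is hiding.
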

\begin{proof}
  We use the ordinal descent equation~\eqref{eq-ordescent}
  and \autoref{prop-hardy-norm}.
\end{proof}\noindent
As an immediate corollary, we can bound the asymptotic complexity of
programs proven to terminate through a $g$-controlled ranking
function:
\begin{corollary}\label{cor-lft}
  Given a transition system $\?S=\tup{\Conf,{\to_\?S}}$, if there
  exists a $g$-controlled ranking function into
  $\alpha<\varepsilon_0$, then $\?S$ runs in time $O(g_\alpha(n))$.
\end{corollary}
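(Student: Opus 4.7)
The plan is to reduce the question of execution length to the length function bound of \autoref{th-lft}. Fix a $g$-controlled ranking function $f\colon\Conf\to\alpha$ and consider an arbitrary execution $c_0\to_\?S c_1\to_\?S\cdots$ of $\?S$. First I would observe, as already noted after \autoref{def-qrank}, that a ranking function into the wo $\tup{\alpha,{\leq}}$ is in particular a quasi-ranking function, and that the associated sequence of ranks $f(c_0),f(c_1),f(c_2),\dots$ is strictly decreasing in $\alpha$, hence a bad sequence over $\tup{\alpha,{\leq},N}$.

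Next I would verify that this sequence is $(g,Nf(c_0))$-controlled in the sense of~\eqref{eq-control}: the control condition~\eqref{eq-crank} on $f$ gives $Nf(c_{i+1})\leq g(Nf(c_i))$, and an easy induction using the monotonicity of $g$ lifts this to $Nf(c_i)\leq g^i(Nf(c_0))$. Writing $n_0\eqdef Nf(c_0)$, the length of the execution is therefore bounded by $L_{g,\alpha}(n_0)$.

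The main step is then to apply \autoref{th-lft}, which yields $L_{g,\alpha}(n_0)=g_\alpha(n_0)$ as soon as $n_0\geq N\alpha$. To turn this into the asymptotic bound $O(g_\alpha(n))$ where $n$ denotes the input size, I would define the input size to be (an upper bound on) $\max(Nf(c_0),N\alpha)$---a natural measure since both quantities are fixed data of the initial configuration and of the ranking function itself. Monotonicity of $g_\alpha$, recorded in \autoref{sec-subrec}, then gives $L_{g,\alpha}(n_0)\leq g_\alpha(n)$, which is the desired $O(g_\alpha(n))$.

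The only mild subtlety I anticipate is the side condition $x\geq N\alpha$ required by \autoref{th-lft}: the ordinal $\alpha$ itself contributes a constant $N\alpha$ to the input size, which is why this bound is stated asymptotically rather than as an equality. Everything else is a routine packaging of \autoref{th-lft} and the definition of $g$-controlled ranking functions.
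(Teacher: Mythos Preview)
Your proposal is correct and follows exactly the route the paper intends: the paper treats this corollary as immediate from \autoref{th-lft}, and your plan simply spells out the details (bad sequence, control via~\eqref{eq-crank}, application of the length function theorem, handling of the $x\geq N\alpha$ side condition via the asymptotic $O(\cdot)$). There is nothing missing and nothing to compare.
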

As an illustration, a program proven to terminate thanks to a
$g$-controlled ranking function into $\tup{\+N^d,{\lex},|.|_{\+N^d}}$
has therefore an $O(g_{\omega^d}(n))$ bound on its worst-case
asymptotic complexity.  In the case of the program
of \autoref{fig-prog}, this yields an upper bound of
$g_{\omega^2}(m)=1+g_{\omega\cdot m+m}(m)$ on its complexity for
$g(x)\eqdef 2x$ and $m\eqdef\max(x,y,n)$.  This matches its
actual complexity.
\section{Complexity Classification}\label{sec-fgcc}
As already mentioned, the complexity bounds provided by
\autoref{th-lft} are so high that they are only of interest for
algorithms of very high complexity.  Rather than obtaining precise
complexity statements as in \autoref{th-lft}, the purpose is then to
classify the complexity in rather broad terms: e.g., is the algorithm
elementary?  primitive-recursive?  multiply-recursive?

\subsection{Fast-Growing Classes}
\begin{figure}[t]
  \centering
  \begin{tikzpicture}[every node/.style={font=\small}]
    \draw[color=blue!90,fill=blue!20,thick](-1.1,0) arc (180:0:5.4cm);
    \shadedraw[color=black!90,top color=black!20,middle color=black!5,opacity=20,shading angle=-15](-1,0) arc (180:0:5cm);
    \draw[color=blue!90,fill=blue!20,thick](-.6,0) arc (180:0:3.3cm);
    \shadedraw[color=black!90,top color=black!20,middle color=black!5,opacity=20,shading angle=-15](-.5,0) arc (180:0:3cm);
    \draw[color=blue!90,fill=blue!20,thick](-.1,0) arc (180:0:1.7cm);
    \shadedraw[color=black!90,top color=black!20,middle
    color=black!5,opacity=20,shading angle=-15,thin](0,0) arc (180:0:1.5cm);
    \draw (1.5,.5) node{$\elem$};
    \draw (4,1.2) node[color=blue]{$\F3=\Tow$};
    \draw[color=blue!90,thick] (3.15,1) -- (3.05,.9);
    \draw (2.5,2) node{$\CC{Primitive Recursive}$};
    \draw (6.3,2) node[color=blue]{$\F\omega=\ack$};
    \draw[color=blue!90,thick] (5.65,1.8) -- (5.55,1.7);
    \draw (4,3.8) node{$\CC{Multiply Recursive}$};
    \draw (10,3) node[color=blue]{$\F{\omega^\omega}=\hack$};
    \draw[color=blue!90,thick] (9.1,2.8) -- (9,2.7);
    \draw (8.5,5) node[rotate=40,font=\Large]{$\cdots$};
  \end{tikzpicture}
  \caption{Some complexity classes beyond \elem.\label{fig-fg}}
\end{figure}
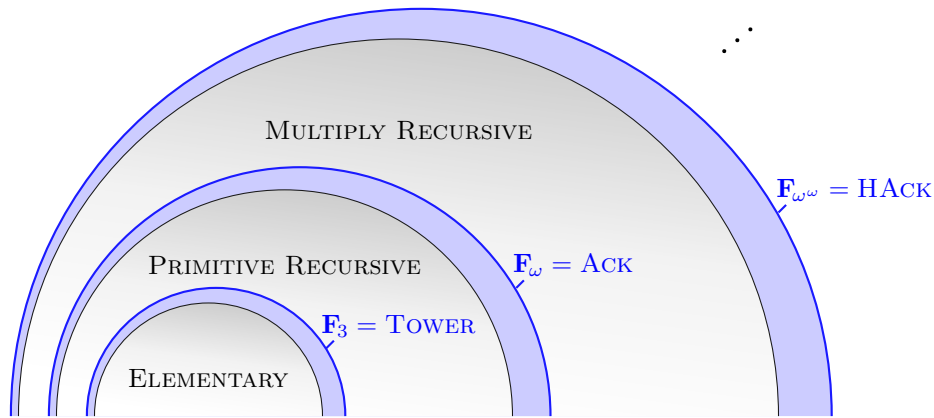
In order to tackle the complexities derived from \autoref{th-lft}, we
need to employ complexity classes for very high complexity problems.
For $\alpha>2$, we define respectively the \emph{fast-growing
  function} classes $(\FGH\alpha)_\alpha$ of \citet{lob70} and the
\emph{fast-growing complexity} classes $(\F\alpha)_\alpha$
of~\citep{schmitz13} by
\begin{align}
  \FGH{<\alpha}&\eqdef\bigcup_{\beta<\omega^\alpha}\CC{FDTime}\big(H^\beta(n)\big)\;,&
  \F\alpha&\eqdef\bigcup_{p\in\FGH{<\alpha}}\CC{DTime}\big(H^{\omega^\alpha}(p(n))\big)\;.
\end{align}
Recall that $H^\alpha$ denotes the $\alpha$th function in the Hardy
hierarchy with generative function $H(x)\eqdef x+1$, and that
$\CC{FDTime}(t(n))$ (resp.\ $\CC{DTime}(t(n))$) denotes the set of
functions computable (resp.\ problems decidable) in deterministic time
$O(t(n))$.

Some important complexity milestones can be characterised through
these classes.  Regarding the function classes, $\FGH{<3}$ is the
class of elementary functions, $\FGH{<\omega}$ the class of
primitive-recursive functions, $\FGH{<\omega^\omega}$ the class of
multiply-recursive functions, and $\FGH{<\varepsilon_0}$ the class of
ordinal-recursive functions.  Turning to the complexity classes,
$\F3=\Tow$ is the class of problems with complexity bounded by a tower
of exponentials of height bounded by an elementary function of the
input, $\F\omega=\ack$ the class of problems with complexity bounded
by the Ackermann function of some primitive-recursive function of the
input, and $\F\omega^\omega=\hack$ of problems with complexity bounded
by the hyper-Ackermann function $H^{\omega^{\omega^\omega}}$ composed
with some multiply-recursive function.  In other words, $\F3$
(resp.\ $\F\omega$ and $\F{\omega^\omega}$) is the smallest complexity
class $\F\alpha$ which contains non elementary problems (resp.\ non
primitive recursive and non multiply recursive problems); see
\autoref{fig-fg}.

\subsection{Classification} The explicit
formulation for the length function provided by \autoref{th-lft}
yields upper bounds in the $(\F\alpha)_\alpha$ complexity classes.
Assume that $g$ belongs to the function class $\FGH{<\gamma}$ for some
$\gamma$.  Then, by \citep[\theoremautorefname~4.2]{schmitz13}, an
algorithm with a $g_{\omega^\alpha}$ complexity yields an upper bound
in $\F{\gamma+\alpha}$.  In particular, a decision procedure
terminating thanks to a lexicographic ranking function into
$\tup{\+N^d,{\lex},|.|_{\+N^d}}$ with a linear control yields an
$\F{d+1}$ complexity upper bound.  At greater complexities, if $g$ is
primitive recursive---i.e.\ is in $\FGH{<\omega}$---and
$\alpha\geq\omega$, then we obtain an upper bound in $\F\alpha$
\citep[\corollaryautorefname~4.3]{schmitz13}.

\section{Product vs.\ Lexicographic Orderings}\label{sec-ramsey}
Although we focus in this paper on ranking functions, automated
termination provers employ many different techniques.  While
lexicographic ranking functions are fairly common~\citep[e.g.][for
  recent references]{cook13,benamram13,urban14}, \emph{disjunctive
  termination arguments} (aka Ramsey-based termination
proofs)~\citep{podelski04} are also a popular alternative.

\subsection{Disjunctive Termination Arguments}
In order to prove a program transition relation $\to_\?S$ to be
well-founded, \citet{podelski04} show that it suffices to exhibit a
finite set of well-founded relations
$T_1,\dots,T_d\subseteq\Conf\times\Conf$ and prove that the transitive
closure $\to_\?S^+$ is included in the union $T_1\cup\cdots\cup T_d$.
In practice, we can assume each of the $T_j$ for $1\leq j\leq d$ to be
proved well-founded through a quasi-ranking function $f_j$ into a wqo
$\tup{A_j,{\leq_j}}$.  In the case of the program in
\autoref{fig-prog}, choosing
\begin{align}
  T_1&=\{((\ell_0,x,y,n),(\ell_0,x',y',n'))\mid x> 0\wedge x'<x\}\\
  T_2&=\{((\ell_0,x,y,n),(\ell_0,x',y',n'))\mid y> 0\wedge y'<y\}
\end{align}
yields such a disjunctive termination argument, with $A_1=A_2=\+N$.

Another way of understanding disjunctive termination arguments is that
they define a quasi-ranking function $f$ into the product wqo
$\tup{A_1\times\cdots\times A_d,{\leq_\times}}$, which maps a
configuration $c$ to the tuple $\tup{f_1(c),\dots,f_d(c)}$,
c.f.\ \citep[\sectionautorefname~7.1]{figueira11}.

\subsection{A Comparison} Let us consider disjunctive termination
arguments where each of the $d$ relations $T_j$ has a ranking function
into $\+N$, i.e.\ defining a quasi-ranking function into
$\tup{\+N^d,{\leq_\times}}$.  A natural question at this point is how
does it compare with a ranking function into $\tup{\+N^d,{\lex}}$,
which seems fairly similar?  Which programs can be shown to terminate
with either method?

We might attempt to differentiate them through their \emph{maximal
  order types}~\citep{dejongh77,blass08}.  In general, this is the
supremum of the order types of all the linearisations of a wqo:
\begin{equation}\label{eq-maxot}
  o(A,{\leq})\eqdef\sup\{o(A,{\preceq})\mid{\preceq}\text{ is a
    linearisation of }{\leq}\}\;.
\end{equation}
However, in the case of $\tup{\+N^d,{\leq_\times}}$, this maximal
order type is $\omega^d$, matching the order type of
$\tup{\+N^d,{\lex}}$.

We can consider instead the maximal length of their controlled bad
sequences.  Those are different: the following example taken
from~\citep[Remark~6.2]{figueira11} is a $(g,1)$-controlled bad
sequence over $\tup{\+N^2,{\leq_\times}}$, which is good for
$\tup{\+N^2,{\lex}}$, where $g(x)\eqdef x+2$:
\begin{equation}
    (1,1),(3,0),(2,0),(1,0),(0,9),(0,8),\dots,(0,1),(0,0)
\end{equation}
This sequence has length $14$ whereas the maximal $(g,1)$-controlled
bad sequence for $\tup{\+N^2,{\lex}}$ is
of length $g_{\omega^2}(1)=8$:
\begin{equation}
  (1,1),(1,0),(0,5),(0,4),\dots,(0,1),(0,0)\;.
\end{equation}

\subsection{Length Functions for the Product Ordering}
More generally, the length function theorems for
$\tup{\+N^d,{\leq_\times}}$~\citep{mcaloon,clote,figueira11,aawqo,abriola12}
provide larger upper bounds than the $g_{\omega^d}$ bound of
\autoref{th-lft}.  \citet*{steila14} also recently derived complexity
bounds for disjunctive termination arguments, based instead on a
constructive termination proof.  The following bounds from
\citep[\chapterautorefname~2]{aawqo} are the easiest to compare with
\autoref{th-lft}:
\begin{fact}[\citep{aawqo}]\label{fact-lft}
  Let $d\geq 0$ and $h(x)\eqdef d\cdot g(x)$.  Then
  $L_{g,\+N^d}(x)\leq h_{\omega^d}(dx)$.
\end{fact}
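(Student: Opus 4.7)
The plan is to prove the bound by induction on $d$, ultimately reducing to the wellorder case covered by \autoref{th-lft}. The target bound $h_{\omega^d}(dx)$ reveals that, compared to the direct ordinal bound $g_{\omega^d}(x)$ that \autoref{th-lft} yields for the linearisation $(\+N^d,\lex)$, the non-wellorder product $(\+N^d,\leq_\times)$ costs a multiplicative factor of $d$ in both the control function (from $g$ to $h=d\cdot g$) and the initial norm (from $x$ to $dx$). A naive reflection via the order-preserving map $o(n_1,\ldots,n_d)=\omega^{d-1}n_1+\cdots+n_d$ fails to turn $\leq_\times$-bad sequences into descending ordinal sequences (e.g.\ $(0,5)\to(1,0)$ is a product-bad step but an ordinal increase $5\to\omega$), so a direct appeal to \autoref{th-lft} is not available.

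The base cases $d\leq 1$ are direct: $\+N^0$ is a singleton with maximal bad sequence length $1=h_{\omega^0}(0)$, while bad sequences in $\+N^1$ are strictly descending, bounded by $x+1\leq h_\omega(x)$. For the inductive step with $d\geq 2$, given a $(g,x)$-controlled bad sequence $y_0,\ldots,y_{L-1}$ in $(\+N^d,\leq_\times)$, I would write each $y_i=(a_i,z_i)$ with $a_i\in\+N$ and $z_i\in\+N^{d-1}$, and partition the sequence into maximal blocks on which $a_i$ is weakly increasing. Within such a block, the tails $z_i$ form a bad sequence in $(\+N^{d-1},\leq_\times)$: whenever $i<j$ lie in the same block one has $a_i\leq a_j$, so $z_i\leq_\times z_j$ would give $y_i\leq_\times y_j$, contradicting badness of the original sequence. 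The inductive hypothesis bounds each block's length in terms of its starting norm, while the number of blocks is governed by the strict decreases of $a_i$ at block boundaries, driving the outer Cicho\'n recursion at $\omega^d$.

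The main obstacle is the arithmetic of aggregating the per-block bounds $h_{\omega^{d-1}}(d\cdot g^\ell(x))$ into the single closed-form expression $h_{\omega^d}(dx)$. This aggregation must match the Cicho\'n recursion $h_{\omega^d}(y)=1+h_{P_y(\omega^d)}(h(y))$ via the fundamental sequence $\omega^d(x)=\omega^{d-1}\cdot(x+1)$, which peels off exactly one $\omega^{d-1}$ summand per outer iteration. The $d$-fold blowup on the control function is what absorbs both the separate tracking of the first coordinate alongside the remaining $d-1$ components and the geometric growth of $g^\ell(x)$ along consecutive blocks, while the $d$-fold blowup on the initial norm matches the $d$-ary CNF of $o(y_0)\in\omega^d$. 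Verifying that this specific $d$ factor suffices — rather than requiring a higher polynomial in $d$ — is the combinatorial heart of the argument and is precisely what is worked out in~\citep[Chapter~2]{aawqo}.
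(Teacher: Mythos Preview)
The paper does not prove this statement: it is presented as a \emph{Fact} cited from~\cite[\chapterautorefname~2]{aawqo}, with no argument given in the text.  Your proposal likewise defers the combinatorial core to the same reference, so at that level the two agree.

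Your intermediate sketch, however, diverges from the approach in~\cite{aawqo} and contains a gap.  The block decomposition you describe---partitioning the bad sequence into maximal runs where the first coordinate $a_i$ is weakly increasing---does yield bad $(d{-}1)$-dimensional tail sequences within each block, but the number of blocks is \emph{not} controlled by a descending sequence of first coordinates as you suggest.  For instance, the first-coordinate pattern $1,3,2,5,0$ is realisable in a bad sequence over $\tup{\+N^2,{\leq_\times}}$ (e.g.\ $(1,4),(3,2),(2,3),(5,1),(0,100)$), and its maximal weakly-increasing blocks start at first coordinates $1,2,0$, which is not decreasing.  The strict drops at block boundaries are only local (from the \emph{end} of one block to the \emph{start} of the next) and do not compose into a global descent bounding the number of blocks; your ``outer Cicho\'n recursion at $\omega^d$'' therefore has no driver.

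The argument in~\cite{aawqo} instead proceeds via the residual/descent-equation framework that this very paper uses in \autoref{sec-lft}: one reflects $\+N^d/\vec x$ into a disjoint sum of $d$ lower-dimensional products and iterates the descent equation~\eqref{eq-descent}.  The factor-of-$d$ blowups in both the control function and the initial norm---which you correctly isolated---arise from this $d$-way branching of the residual, not from a first-coordinate block count.
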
%
Fact~\ref{fact-lft} allows to bound the running time of programs
proven to terminate with $d$ transition invariants $T_j$, each shown
well-founded through some $g$-controlled ranking function into $\+N$.
In particular, for linearly controlled ranking functions,
$d$-dimensional transition invariants entail again upper bounds in
$\F{d+1}$, just like linearly controlled ranking functions into
$\tup{\+N^d,{\lex}}$ do.  Thus, at the coarse-grained level of the
fast-growing complexity classes, the differences between
\autoref{th-lft} and \autoref{fact-lft} disappear.

\subsection{Controlling Abstractions}
The previous classifications into primitive recursive complexity
classes $\F{d+1}$ might be taken to imply that non-primitive recursive
programs are beyond the reach of the current automated termination
methods, which usually rely on the synthesis of affine ranking
functions.  This is not the case, as we can better see with the
example of \emph{size-change termination} proofs: \citet*{lee01}
consider as their Example~3 the two-arguments Ackermann function:
\begin{lstlisting}[morekeywords={if,then,else}]
  a(m, n) = if m = 0 then n + 1 else
            if n = 0 then a(m-1, 1)
                     else a(m-1, a(m, n-1))
\end{lstlisting}
They construct a size-change graph on two variables to prove its
termination.  The longest decreasing sequence in such a graph is of
length $O(n^2)$; more generally, \citet*{daviaud14} recently showed
that the asymptotic worst-case complexity of a size-change graph is
$\Theta(n^r)$ for a computable rational $r$.  Here we witness an even
larger gap between the actual program complexity and the complexity
derived from its termination argument: the Ackermann function vs.\ an
$O(n^2)$ bound.

The source of this apparent paradox is abstraction: the size-change
graph for \lstinline{a(m, n)} terminates if and only if the original
program does, but its complexity is `lost' during this abstraction.
In the example of the Ackermann function, the call stack is abstracted
away, whereas we should include it for \autoref{th-lft} to apply.
This is done by \citet[Example~3]{dershowitz79}, who prove the
termination of the Ackermann function by exhibiting an $H$-controlled
ranking function into $\tup{\+M(\+N^2),{\lem}}$, for which
\autoref{th-lft} yields an $O(H_{\omega^{\omega^2}}(n))$ complexity
upper bound---this is pretty much optimal.

The question at this point is how to deal with abstractions.  For
size-change abstractions, \citet{benamram02} shows for instance that
the programs provable to terminate are always multiply recursive, but
this type of analysis is missing for other abstraction techniques,
e.g.\ for abstract interpretation ones~\citep{urban14}.

\section{Concluding Remarks}

Length function theorems often seem to relate the length function
$L_{g,A}$ for $(g,n)$-controlled bad sequences over a wqo
$\tup{A,{\leq}}$ with a Cicho\'n function $h_{o(A,{\leq})}$ indexed by
the maximal order type $o(A,{\leq})$ (recall Eq.~\eqref{eq-maxot}) for
some `reasonable' generative function $h$.  This is certainly the case
of e.g.\ \autoref{th-lft}, where $h(x)=g(x)$, but also of
\autoref{fact-lft} where $h(x)=d\cdot g(x)$, and of the corresponding
theorem in~\citep{schmitz11} for Higman's Lemma, where $h(x)=x\cdot
g(x)$.

This is a relaxation of \emph{Cicho\'n's Principle}~\citep{cichon92},
who observed that rewriting systems with a termination ordering of
order type $\alpha$~\citep{dershowitz88} often had a complexity
bounded by the slow-growing function $G_\alpha$ (defined by choosing
$G(x)\eqdef x$ as generative function in Cicho\'n's hierarchy).  A
counter-example to the principle was given by \citet{lepper04} using
the Knuth-Bendix order; however it did not disprove the relaxed
version of Cicho\'n's Principle, where the generative function $h$ can
be chosen more freely.
A recent analysis of generalised Knuth-Bendix orders by
\citet{moser14} exhibits a counter-example to the relaxed version.  An
open question at the moment is therefore to find general conditions
which ensure that this relaxed Cicho\'n Principle holds.

\subsubsection*{Acknowledgements\nopunct.}The author thanks Christoph
Haase, Georg Moser, and Philippe Schnoebelen for helpful
discussions.
\bibliographystyle{abbrvnat}
\bibliography{journalsabbr,conferences,ordterm}
\end{document}